\def\BState{\State\hskip-\ALG@thistlm}
\renewcommand{\epsilon}{\varepsilon}
\renewcommand{\phi}{\varphi}
\newtheorem{theorem}{Theorem}[section]
\newtheorem{definition}[theorem]{Definition}
\newtheorem*{definition*}{Definition}
\newtheorem{proposition}[theorem]{Proposition}
\newtheorem{lemma}[theorem]{Lemma}
\newtheorem{remark}[theorem]{Remark}
\newtheorem{example}[theorem]{Example}
\newtheorem{conjecture}[theorem]{Conjecture}
\newtheorem*{conjecture*}{Conjecture}
\newcommand{\QG}{\textsf{QG}}
\newcommand{\QS}{\textsf{QS}}
\newcommand{\QEG}{\textsf{QEG}}
\newcommand{\QUG}{\textsf{QUG}}
\newcommand{\CG}{\textsf{CG}}
\newcommand{\CS}{\textsf{CS}}
\newcommand{\CEG}{\textsf{CEG}}
\newcommand{\CUG}{\textsf{CUG}}
\begin{document}

\definecolor{myblue}{RGB}{80,80,160}
\definecolor{mygreen}{RGB}{80,160,80}

\author{Ion Nechita}
\email{nechita@irsamc.ups-tlse.fr}
\address{Laboratoire de Physique Th\'eorique, Universit\'e de Toulouse, CNRS, UPS, France}

\author{Jordi Pillet}
\email{jordipillet@gmail.com}
\address{Universit\'e de Bourgogne, Dijon, France}

\title{SudoQ --- a quantum variant of the popular game}

\begin{abstract}
    We introduce \emph{SudoQ}, a quantum version of the classical game Sudoku. Allowing the entries of the grid to be (non-commutative) projections instead of integers, the solution set of SudoQ puzzles can be much larger than in the classical (commutative) setting. We introduce and analyze a randomized algorithm for computing solutions of SudoQ puzzles. Finally, we state two important conjectures relating the quantum and the classical solutions of SudoQ puzzles, corroborated by analytical and numerical evidence.
\end{abstract}

\date{\today}

\maketitle

\tableofcontents

\section{Introduction}

The \emph{Sudoku puzzle} has become nowadays one of the most popular pen-and-paper solitaire games. Its origin can be traced back to 1892, in the pages of the French monarchist daily newspaper ``Le Si\`ecle'', where a very similar puzzle was proposed to the readers. The Sudoku game can be seen as an extension of the older \emph{Latin square} game, already introduced by  Euler in the 18th century. The rules of the latter game are simple: you have to fill in a \( n \times n \) grid such that each row and each column contains a permutation of the elements of \( \{1,2,...,n\} \). A Sudoku puzzle is a \( n^2 \times n^2 \) Latin square with the supplementary constraints that each \( n \times n \) sub-square must also contain a permutation of \( \{1,2,...,n^2\} \); the most popular version of the game, usually found in newspapers, assumes \(n=3 \), presenting the puzzle as a $9 \times 9$ partially filled grid. The Sudoku puzzle has elicited numerous mathematical results, most of them dealing with the enumeration of puzzles having different properties. Importantly, it has been proven that the minimum number of clues (non empty cells) in any $9 \times 9$ proper Sudoku puzzle is 17 \cite{mcguire2014there}. 

In \cite{musto2016quantum}, the authors introduced \emph{quantum Latin squares}, which are non-commutative (or quantum) extensions of \( n \times n \) Latin squares. The elements of a quantum Latin square are now vectors in a finite dimensional, complex Hilbert space, with the rule being that the vectors in each row and each column must form an orthonormal basis (ONB) of the Hilbert space. In other words, the following \emph{quantization procedure} is used: 
\begin{align*}
\text{ element } a_{ij} \in \{1, 2, \ldots, n\} & \leadsto \text{ vector } x_{ij} \in \mathbb C^n \\
\text{ rows / cols are permutations of } \{1, 2, \ldots, n\} & \leadsto \text{ vectors in rows / cols form ONBs of } \mathbb C^n.
\end{align*}

One can restate the above properties in terms of the (unit-rank) orthogonal projections on the vectors. In this language, the elements of a square are projections $p_{ij} = \ketbra{x_{ij}}{x_{ij}}$ with the property that the row and the column sums are all equal to the identity. Such matrices, known as \emph{magic unitaries}, appear in the representation theory of the quantum permutation group $S_n^+$ \cite{wang1998quantum,banica2017flat,banica2019free}. Let us also mention two recent connections between the theory of quantum Latin squares and two problems in quantum information theory: Stinespring dilations \cite{benoist2017bipartite} and quantum homomorphisms of graphs \cite{atserias2019quantum}. 

\medskip

In this paper, we introduce the puzzle \emph{SudoQ}, a quantum generalization of the Sudoku game. The main idea is to impose to a quantum Latin square the supplementary constraint that each sub-square must also form an orthonormal basis of \( \mathbb{C}^n \). 

\begin{definition*}
A \emph{SudoQ square} of size $n^2$ is a $n^2 \times n^2$ matrix of vectors in $\mathbb C^{n^2}$ with the property that the $n^2$ vectors in each row, column, or $n \times n$ sub-square, form an orthonormal basis of $\mathbb C^{n^2}.$
\end{definition*}

The algebraic structure of sets of projections satisfying relations as above have been studied recently in relation to non-local games. The definition above can be seen as a restricted version of the  free hypergraph \( C^{*} \) algebras introduced in \cite{fritz2020curious}, in which one encodes the constraints as hyperedges. Given a hypergraph \(H=(V,E)\), the associated \emph{free hypergraph \( C^{*} \)-algebra} is the following finitely presented \( C^{*} \)-algebra:
\begin{equation*}
    C^{*}(H)=\bigg\langle (p_v)_{v \in V} \quad | \quad p_v^2=p_v=p_v^{*},\quad  \sum_{v \in e} p_v = \mathrm I,  \quad e \in E \bigg\rangle.
\end{equation*}
A first example is the one corresponding to the \emph{rook's hypergraph}. Let us imagine an empty chessboard, the vertices of the hypergraph correspond to all the squares, whereas the two hyperedges intersecting on a given square contain all the position accessible for a rook on this square (see Figure \ref{fig:hypergraph}, left panel). In the case of rook's hypergraph, this algebra corresponds exactly to the one associated with the quantum permutation group $S_n^+$, and the conditions are precisely the ones for a quantum Latin square. We consider in this paper the Sudoku hypergraph, which is naturally defined by adding to the rook hypergraph the sub-square constraints (see Figure \ref{fig:hypergraph}, right panel). 

\begin{figure}[ht]
  \centering
  \includegraphics[width=0.3\textwidth]{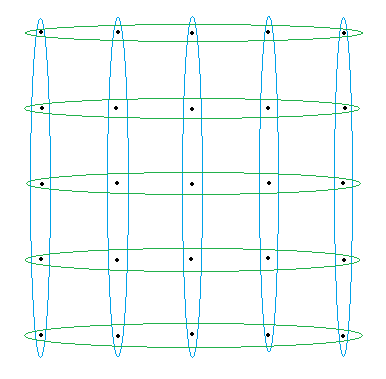} \qquad\qquad\qquad \includegraphics[width=0.3\textwidth]{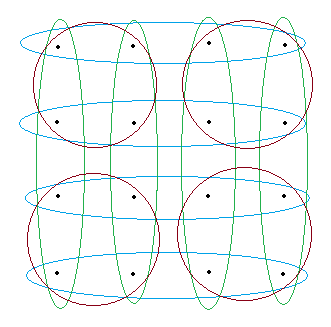}
  \caption{Rook's hypergraph of dimension $5 \times 5$ (left) and Sudoku hypergraph of dimension $2^2 \times 2^2$ (right).}
  \label{fig:hypergraph}
\end{figure}

After defining SudoQ squares, we introduce \emph{quantum grids}, which are just matrices of vectors, having possibly zero entries (to denote a missing element in a Sudoku puzzle). We consider the \emph{filling partial order} on the set of quantum grids, allowing us to analyze the solution space of a SudoQ grid. We analyze the relation between \emph{classical} and \emph{purely quantum} solutions of SudoQ grids. We call a grid classical if its entries are either zero or elements of the canonical basis of \( \mathbb{C}^{n^2} \) (usually designated by \( \{\ket{1},\ket{2},...,\ket{n^2} \} \)). The relation between the classical and quantum solution sets is the object of two main conjectures we can informally state as follows.

\begin{conjecture*}
    An unsolvable Sudoku puzzle does not have any quantum solutions. Similarly, a proper Sudoku puzzle (having a unique classical solution) has no purely quantum solutions.
\end{conjecture*}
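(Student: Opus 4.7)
The plan is to deduce both halves of the conjecture from a single unified statement: whenever all clues of a SudoQ puzzle are canonical basis vectors, every quantum solution must itself be classical. Granted this, an unsolvable classical puzzle admits no quantum solution (any such solution would be classical and hence a classical solution), and a proper classical puzzle admits no purely quantum solution (every quantum solution would reduce to a classical one, which by hypothesis is unique). Thus the whole task becomes: show that classical clues force classical solutions.

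\medskip

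The natural first step is a diagonal reduction. Let $(p_{ij})_{i,j}$ be a quantum solution with classical clues $(c_{ij})_{(i,j)\in C}$, and for each basis label $k \in \{1,\ldots,n^2\}$ set $q^{(k)}_{ij} := \bra{k} p_{ij} \ket{k} \in [0,1]$. Sandwiching each constraint $\sum_{(i,j) \in R} p_{ij} = \mathrm{I}$ between $\bra{k}$ and $\ket{k}$ yields $\sum_{(i,j)\in R} q^{(k)}_{ij} = 1$ for every row, column, and sub-square $R$; the trace condition gives $\sum_k q^{(k)}_{ij} = 1$; and on clue cells one has $q^{(k)}_{ij} = \delta_{k,c_{ij}}$. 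Thus the family $(q^{(k)})_k$ is a feasible point of the standard LP relaxation of the classical Sudoku constraint satisfaction problem. If this LP were integral for every clue configuration, the conjecture would follow immediately: an unsolvable puzzle would admit no feasible point, while a proper puzzle would force $q^{(k)}_{ij} \in \{0,1\}$, and combined with the rank-one condition this pins each $p_{ij}$ to $\ketbra{c_{ij}}{c_{ij}}$.

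\medskip

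The main obstacle is that the LP relaxation is generically not integral (already the Latin square polytope has non-integer vertices), so the diagonal reduction alone cannot suffice and one must exploit the non-commutative structure more fully. Natural additional ingredients are the off-diagonal sum rules $\sum_{(i,j)\in R}\bra{k} p_{ij} \ket{k'} = \delta_{kk'}$ for $k \neq k'$, the rank-one identities $|\bra{k}p_{ij}\ket{k'}|^2 = q^{(k)}_{ij} q^{(k')}_{ij}$ together with the phase consistency imposed by each $p_{ij}$ being a single rank-one projector, and the positive semidefiniteness of $p_{ij}$. A complementary inductive scheme would propagate classicality along Sudoku inference rules: any empty cell whose row, column, or sub-square already contains classical projections onto $n^2-1$ distinct basis vectors is forced to project onto the missing basis vector, and such propagation may cascade. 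For puzzles solvable by elementary propagation alone the argument closes, so the hard part will be handling puzzles requiring deeper deductions --- ``hidden singles'' and beyond --- where the rank-one and off-diagonal constraints must be combined globally, perhaps by recasting the problem as a real Nullstellensatz question on the projector variety and exhibiting a sum-of-squares certificate that the only real solutions correspond to canonical bases. Systematic computer verification at $n=2$ would test this machinery and plausibly reveal the algebraic invariants needed in the general case.
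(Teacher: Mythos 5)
This statement is one of the paper's two central \emph{conjectures}; the paper offers no proof of it, only partial evidence (an exhaustive treatment of some $2\times 2$ Shi~Doku instances, numerics from the Sinkhorn-type algorithm, and one worked example that uses the commutativity of $3\times 3$ quantum Latin squares). So there is no ``paper proof'' to match your argument against, and your proposal does not close the gap either --- but more importantly, its opening move is refuted by an example appearing in the paper itself.

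Your plan reduces both halves of the conjecture to the ``unified statement'' that classical clues force every quantum solution to be classical. This is false. The paper exhibits the grid
\begin{equation*}
\begin{bmatrix}
1 & 2 & 3 & 4 \\
3 & 4 & \cdot & \cdot \\
4 & 3 & \cdot & \cdot \\
2 & 1 & 4 & 3
\end{bmatrix},
\end{equation*}
whose clues are all canonical basis vectors, yet whose empty $2\times 2$ block can be filled with $\begin{bmatrix} x & y \\ y & x \end{bmatrix}$ for \emph{any} orthonormal basis $\{x,y\}$ of $\operatorname{span}\{\ket{1},\ket{2}\}$, giving a continuum of purely quantum solutions. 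Hence no argument that uses only classicality of the clues can succeed: any proof must genuinely exploit unsolvability (for the first half) or uniqueness of the classical solution (for the second half), and these are exactly the hypotheses your reduction discards. Your diagonal compression $q^{(k)}_{ij}=\bra{k}p_{ij}\ket{k}$ to the Sudoku LP relaxation is a correct and potentially useful necessary condition, but as you yourself note the LP is not integral, so it decides neither conjecture; the propagation argument covers only puzzles solvable by forced singles; and the Positivstellensatz/sum-of-squares step is named but not carried out. As it stands the proposal is a research programme whose foundational lemma is false and whose remaining steps are unexecuted, so the conjecture remains open.
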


We also introduce a \emph{SudoQ solver}, an algorithm for finding (approximate) solutions of SudoQ puzzles. The algorithm is based on a Sinkhorn-like alternating normalization method. Initially developed for bistochastic matrices \cite{sinkhorn1964relationship,sinkhorn1967concerning}, these algorithms have been recently generalized to the quantum (non-commutative) setting in different ways \cite{gurvits2004classical,benoist2017bipartite, banica2017flat,burgisser2018alternating,burgisser2018efficient,nechita2019sinkhorn}. The algorithm we present is close to the ones in \cite{banica2017flat} and \cite{nechita2019sinkhorn} (used, respectively, to generate random magic unitaries and random quantum symmetries of graphs), with the particularity that a subset of the entries of the grid are being kept fixed. We analyze numerically the algorithm and provide evidence for the main conjectures. We also discuss some applications to error correcting codes.

This paper is organised as follows:
\begin{itemize}
    \item In Sections \ref{sec:grids}, \ref{sec:classical}, \ref{sec:sudoq-solutions} we give some theoretical definitions and results on the spaces of SudoQ grids and solutions. The two main conjectures regarding SudoQ solutions are stated in Section \ref{sec:sudoq-solutions}.
    \item In Section \ref{sec:sudoq-algorithm} we describe a Sinkhorn-like algorithm for computing solutions of SudoQ puzzles. Numerical results both on the efficiency of our algorithm and supporting the two conjectures on the number of SudoQ solutions from Section \ref{sec:sudoq-solutions} are given in Section \ref{sec:experiments}.
    \item Finally, in Section \ref{sec:mixed} we consider a more general setting called ``mixed SudoQ'' and in Section \ref{sec:application} we present a quantum version of Sudoku code as an application of the previous results.
\end{itemize}

\section{Quantum grids, squares, and the SudoQ game}\label{sec:grids}

In this section we introduce the main algebraic objects we are going to study, the quantum grids and squares. Following \cite{musto2016quantum}, we are going to work in the linear algebraic framework (matrices of finite dimensional projections) rather than operator algebraic one from \cite{fritz2020curious}. 

Constraint lists, defined below, encode the different requirements for a matrix of rank-one projections to be a quantum Latin square or a quantum Sudoku square, and correspond to the hyperedges from Figure \ref{fig:hypergraph}.

\begin{definition}
	A \emph{constraint list} is a sequence $\mathcal C = (\mathcal C_n)_{n \geq 1}$, where $\mathcal C_n$ is a set of $n$-subsets of $[n] \times [n]$. 
\end{definition}

\begin{example}
	The constraints for a quantum Latin square are that every row and column of a matrix of projections contain an orthonormal basis of the Hilbert space. We have thus
	$$\mathcal L_n = \{ \{(i, 1), (i, 2), \ldots (i,n)\} \, : \, i \in [n] \} \sqcup \{ \{(1, j), (2, j), \ldots (n, j)\} \, : \, j \in [n] \}.$$
\end{example}

\begin{example}\label{ex:Sudoku-constraints}
	For Sudoku squares, the size must be $n^2$ and the constraints are those for a quantum Latin square, plus the ones corresponding to the $n^2$ squares of size $n \times n$:
	\begin{equation}\label{eq:sudoku-constraints}
	\mathcal S_{n^2} = \mathcal L_{n^2} \sqcup \{ \{((I,i), (J,j)) \, : i,j \in [n]\} \, : \, I,J \in [n]\}.
	\end{equation}
	Above, elements in $[n^2]$ are identified with pairs $(K,k)$, $K,k \in [n]$. 
\end{example}
\begin{remark}
    It is of course possible to have Sudoku squares of size $mn$, seen as $m \times n$ grids of $n \times m$ blocks. We shall not consider such type of puzzles (and constraints) in this paper. 
\end{remark}

First, we introduce the notion of \emph{quantum grids}, which are simply matrices of zero- or unit-rank projections. 

\begin{definition}\label{def:quantum-grid}
	A \emph{quantum grid} of size $n$ is a an element $P \in M_n(M_n^{sa}(\mathbb C))$ with the property that for all $i,j \in [n]$, $P_{ij}$ is either 0 or a rank-one orthogonal projection in $M_n(\mathbb C)$. Non-zero elements $P_{ij}$ are called \emph{clues} or \emph{givens}. The set of quantum grids of size $n$ is denoted by $\QG_n$. 
\end{definition}

A quantum grid should be thought of a partial matrix, where the 0 entries correspond to unknown matrix elements. Note that we do not impose any constraint on the non-zero elements of a quantum grid. Constraints appear in the notion of a \emph{quantum square}, defined below. 

\begin{definition}
	Let $\mathcal C$ be a constraint list. A \emph{quantum square} of type $\mathcal C$ and dimension $n$ is an element $P \in M_n(M_n^{sa}(\mathbb C))$, where 
	\begin{itemize}
		\item for all $i,j \in [n]$, $P_{ij}$ is a rank-one orthogonal projection in $M_n(\mathbb C)$
		\item for all $\{(i_1,j_1), \ldots, (i_n,j_n)\} \in \mathcal C_n$, 
		$$\sum_{s=1}^n P_{i_s,j_s} = I_n.$$
	\end{itemize}
	The set of $\mathcal C$-quantum squares of size $n$ is denoted by $\QS^{\mathcal C}_n \subseteq \QG_n$. 
\end{definition}

\begin{lemma}
	The second property in the definition above can be restated as follows: for any constraint $\{(i_1,j_1), \ldots, (i_n,j_n)\} \in \mathcal C_n$, the vectors $x_1, x_2, \ldots, x_n$ on which $P_{i_s, j_s}$ project, form an orthonormal basis of $\mathbb C^n$.
\end{lemma}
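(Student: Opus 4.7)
The plan is to prove the two directions of the equivalence separately, working directly with the rank-one projections $P_{i_s,j_s} = x_s x_s^*$ where $\|x_s\| = 1$ by assumption.

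For the easy direction, if $\{x_1, \ldots, x_n\}$ is an orthonormal basis of $\mathbb{C}^n$, then the completeness relation gives $\sum_{s=1}^n x_s x_s^* = I_n$ immediately, which is exactly $\sum_s P_{i_s,j_s} = I_n$.

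For the non-trivial direction, assume $\sum_s P_s = I_n$ (where I abbreviate $P_s := P_{i_s,j_s}$). The goal is to show that the $x_s$ are pairwise orthogonal; then, being $n$ orthonormal vectors in $\mathbb{C}^n$, they automatically form an ONB. The key step is to conjugate the identity $\sum_s P_s = I_n$ by a fixed $P_t$: using $P_t^2 = P_t$, one gets
\[
P_t = P_t \Big(\sum_s P_s\Big) P_t = P_t + \sum_{s \neq t} P_t P_s P_t,
\]
so $\sum_{s \neq t} P_t P_s P_t = 0$. Each summand $P_t P_s P_t = (P_s P_t)^*(P_s P_t)$ is positive semi-definite, hence a sum of such terms vanishes only if each vanishes, giving $P_s P_t = 0$ for all $s \neq t$.

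Finally, unfolding this into vectors, $P_s P_t = \langle x_s, x_t \rangle\, x_s x_t^* = 0$ with $x_s, x_t$ unit vectors forces $\langle x_s, x_t\rangle = 0$. Thus the $x_s$ are pairwise orthogonal and of unit norm, hence form an orthonormal basis of $\mathbb{C}^n$. The main subtlety, if any, is the positivity argument showing that $\sum_{s\neq t} P_t P_s P_t = 0$ implies each individual term vanishes; everything else is a routine rewriting.
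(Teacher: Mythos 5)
Your proof is correct. Note that the paper itself gives no proof of this lemma (the equivalent statement, the lemma on $n$-tuples of unit-rank projections summing to the identity, is explicitly ``left to the reader''), so there is no official argument to compare against; yours fills that gap cleanly. The easy direction via the completeness relation is standard, and in the nontrivial direction your conjugation trick $P_t\bigl(\sum_s P_s\bigr)P_t = P_t$ together with the observation that $P_t P_s P_t = (P_s P_t)^*(P_s P_t) \succeq 0$ is sound: a sum of positive semidefinite matrices vanishes only if each term does (e.g.\ because the traces are nonnegative and sum to zero), and $P_sP_t = \langle x_s, x_t\rangle\, x_s x_t^*$ then forces $\langle x_s, x_t\rangle = 0$. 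A marginally more elementary variant of the same idea is to evaluate the quadratic form at $x_t$: from $\langle x_t, \bigl(\sum_s P_s\bigr) x_t\rangle = 1$ and $\langle x_t, P_s x_t\rangle = |\langle x_s, x_t\rangle|^2$ one gets $\sum_{s\neq t} |\langle x_s, x_t\rangle|^2 = 0$ directly, avoiding any discussion of operator positivity; but this is a cosmetic difference, not a different route.
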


Several remarks are in order now. Firstly, note that we prefer to work with orthogonal projections instead of vectors, in order to render the structure of the squares more linear and to replace the orthonormal basis constraint with a linear one. Secondly, note that the difference between the definition above and \cite[Definition 2.1]{fritz2020curious} is the two restrictions we add. For a quantum square $P \in \QS_n$, we require that
\begin{itemize}
    \item the elements $P_{ij}$ are finite dimensional matrices: $P_{ij} \in M_n^{sa}(\mathbb C)$
    \item the projections $P_{ij}$ have unit rank.
\end{itemize}
These restrictions render the existence problem for quantum squares of a given type very different from the one in \cite{fritz2020curious} (see also the discussion in \cite[Section 2]{fritz2020curious}). We shall not study the existence problem in this work, because we are going to consider only the Sudoku constraint list from Example \ref{ex:Sudoku-constraints} for which solutions clearly exist, see Proposition \ref{prop:SudoQ-existence}.  

\begin{definition}
    A \emph{SudoQ square} is a quantum square satisfying the Sudoku constraint list from Example \ref{ex:Sudoku-constraints}. 
\end{definition}

Note that the definition above and the one given in the introduction are the same, in virtue of the following simple lemma, the proof of which is left to the reader.  

\begin{lemma}
    Let $P_{i} = \ketbra{x_i}{x_i}$ be a $n$-tuple of unit rank self-adjoint projections from $M_n(\mathbb C)$. Then,
    $$\sum_{i=1}^n P_i = I_n \iff \{x_1, \ldots, x_n\} \text{ is an ONB of } \mathbb C^n.$$
\end{lemma}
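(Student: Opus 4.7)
The reverse direction is the standard resolution of the identity: if $\{x_1,\dots,x_n\}$ is an ONB then for every $y \in \mathbb C^n$ we have $y = \sum_i \langle x_i, y\rangle x_i$, which is exactly $\sum_i P_i y$, so $\sum_i P_i = I_n$. I would dispatch this in one line.

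For the forward direction, note that each $x_i$ is already a unit vector, because $P_i = \ketbra{x_i}{x_i}$ is assumed to be a unit-rank orthogonal projection (alternatively, $\operatorname{tr}(P_i) = \|x_i\|^2 = 1$ and taking the trace of $\sum_i P_i = I_n$ gives $n = n$, which is consistent). The plan is then to obtain orthogonality via a positivity argument. Conjugating both sides of the identity $\sum_{i=1}^n P_i = I_n$ by a fixed $P_j$ yields
\[
P_j \;=\; P_j I_n P_j \;=\; \sum_{i=1}^n P_j P_i P_j \;=\; P_j \;+\; \sum_{i \neq j} P_j P_i P_j,
\]
so $\sum_{i \neq j} P_j P_i P_j = 0$. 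Each term $P_j P_i P_j = |\langle x_j, x_i\rangle|^2 \ketbra{x_j}{x_j}$ is positive semidefinite, and a sum of positive operators that vanishes forces each summand to vanish. Hence $\langle x_j, x_i\rangle = 0$ for all $i \neq j$, which gives the desired orthonormality. Since the $n$ vectors $x_1,\dots,x_n$ are orthonormal in an $n$-dimensional space, they automatically form an ONB.

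The argument is entirely elementary and I do not anticipate any real obstacle; the only slightly non-routine ingredient is the choice to extract orthogonality by sandwiching with $P_j$ and invoking positivity, rather than, say, testing $\sum_i P_i = I_n$ against the vectors $x_j$ (which would give $\sum_{i\neq j} |\langle x_j, x_i\rangle|^2 = 0$ and work equally well). Either route produces a proof short enough that the authors are justified in leaving it to the reader.
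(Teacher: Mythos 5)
Your proof is correct. The paper explicitly leaves this lemma to the reader and supplies no proof of its own, so there is nothing to compare against; your argument (resolution of the identity for one direction, and sandwiching $\sum_i P_i = I_n$ with $P_j$ plus positivity to extract $\langle x_j, x_i\rangle = 0$ for $i \neq j$ in the other) is a complete and standard way to fill that gap.
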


\begin{proposition}\label{prop:SudoQ-existence}
For all integers $n \geq 1$, the set of SudoQ squares $\QS_{n^2}^{\mathcal S_n}$ is non-empty. 
\end{proposition}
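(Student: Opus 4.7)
The plan is to prove existence by the most direct route: exhibit a classical Sudoku solution of order $n^2$ and push it through the canonical ``classical to quantum'' embedding.

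First, I will produce an explicit classical Sudoku solution. Index rows and columns by pairs in $[n] \times [n]$, writing a row as $(R_1, R_2)$ (where $R_1$ is the band index and $R_2$ the position inside the band) and similarly a column as $(C_1, C_2)$. Define the value in the cell at row $(R_1, R_2)$ and column $(C_1, C_2)$ by
\[
a((R_1, R_2), (C_1, C_2)) = \bigl( (R_2 + C_1) \bmod n,\ (R_1 + C_2) \bmod n \bigr) \in [n] \times [n] \simeq [n^2].
\]
The three Sudoku constraints are each verified by the same one-line observation: fixing any two of the four indices $R_1, R_2, C_1, C_2$ (those that parameterise a row, a column, or a sub-square) and letting the other two vary over $[n] \times [n]$, the assignment above is a bijection from $[n]^2$ to $[n]^2$ because each of the two ``digits'' of $a$ depends on exactly one of the two free indices via an additive translation modulo $n$.

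Second, I will quantise this classical solution. Let $\{\ket{1}, \ldots, \ket{n^2}\}$ be the canonical orthonormal basis of $\mathbb C^{n^2}$ and set
\[
P_{(R_1, R_2),(C_1, C_2)} := \ketbra{a((R_1,R_2),(C_1,C_2))}{a((R_1,R_2),(C_1,C_2))} \in M_{n^2}^{sa}(\mathbb C).
\]
Each $P_{ij}$ is a rank-one orthogonal projection by construction. For every constraint $\{(i_1, j_1), \ldots, (i_{n^2}, j_{n^2})\} \in \mathcal S_{n^2}$, the corresponding values $a(i_s, j_s)$ form a permutation of $[n^2]$ by the Sudoku property established above, so the $P_{i_s, j_s}$ are precisely the rank-one spectral projections of the identity along the canonical basis, and $\sum_s P_{i_s, j_s} = I_{n^2}$. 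Hence $P \in \QS^{\mathcal S_n}_{n^2}$.

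There is no real obstacle here: the only content is producing a single classical Sudoku solution of arbitrary order, for which the additive formula above suffices, and then appealing to the lemma relating $\sum_i P_i = I_n$ to orthonormality of the underlying vectors to conclude. It is worth noting that the construction yields a \emph{classical} SudoQ square (all entries are projections onto canonical basis vectors), which is consistent with the later discussion contrasting classical and purely quantum solutions.
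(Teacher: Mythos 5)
Your proof is correct and follows essentially the same route as the paper: exhibit an explicit classical Sudoku solution of order $n^2$ and quantize it by mapping each integer to the corresponding canonical basis projection. Your additive formula is precisely the special case of the paper's shift-and-copy construction $G_{(i,a),(b,j)} = S_{i,j} + n(S_{a,b}-1)$ obtained by taking $S$ to be the cyclic Latin square, so the two arguments coincide up to relabeling.
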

\begin{proof}
Start from a classical Latin square $S$ if size $n$ (e.g.~the one which has $1, 2, \ldots, n$ on the first row, and the subsequent rows are circular permutations of the first one). Build the Sudoku square $G$ by shifting-and-copying the square $S$ (note the roles $a$ and $i$ play in the expression below):
$$G_{(i,a), (b,j)} = S_{i,j} + n(S_{a,b}-1).$$
The $n=3$ case is displayed below. 
\begin{sudoku}
|1|2|3|4|5|6|7|8|9|.
|7|8|9|1|2|3|4|5|6|.
|4|5|6|7|8|9|1|2|3|.
|3|1|2|6|4|5|9|7|8|.
|9|7|8|3|1|2|6|4|5|.
|6|4|5|9|7|8|3|1|2|.
|2|3|1|5|6|4|8|9|7|.
|8|9|7|2|3|1|5|6|4|.
|5|6|4|8|9|7|2|3|1|.
\end{sudoku}
The SudoQ square $Q$ is obtained by adding kets around the elements of $G$: $Q_{x,y} = \ketbra{G_{x,y}}{G_{x,y}}$ (see also Lemma \ref{lem:quantization}). 
\end{proof}

\begin{definition}
    Given two quantum grids $P, Q \in \QG_n$, we say that $Q$ is a \emph{filling} of $P$ (and we write $P \preceq Q$) if 
	$$P_{ij} \neq 0 \implies Q_{ij} = P_{ij}.$$
	In other words, the set of zero entries of $Q$ is smaller than the set of zero entries of $P$. 
\end{definition}

The filling relation is a partial order on the set of quantum grids, with the all-zero grid being the minimal element and quantum squares (i.e.~grids having no zero elements) being the maximal elements. 

We introduced next the set of solvable, resp.~uniquely solvable grids. 

\begin{definition} A quantum grid $P \in \QG_n$ is called (uniquely) $\mathcal C$-solvable if there exists (an unique) $Q \in \QS_n^{\mathcal C}$ with $P \preceq Q$. The set of $\mathcal C$-solvable (resp.~uniquely solvable) grids is denoted by $\QEG_n^{\mathcal C}$, resp.~$\QUG_n^{\mathcal C}$
\end{definition}

We have
$$\QS_n^{\mathcal C} \subseteq \QUG_n^{\mathcal C} \subseteq \QEG_n^{\mathcal C} \subseteq \QG_n.$$ 

The different notions introduced in this section are gathered in Table \ref{tab:notations-quantum}. 

\begin{table}[ht]
\begin{tabular}{|r|l|c|}
\hline
\rowcolor[HTML]{EFEFEF} 
Notation & Description & Example \\ \hline\hline
$\QG$ & \begin{tabular}[c]{@{}l@{}}\emph{Quantum grids} --- matrices of vectors where \\ entries are either zero or unit vectors.\end{tabular} & 
$\begin{bmatrix} 
\cdot & \cdot & \cdot & \cdot \\
\cdot & \cdot & \cdot & \cdot \\
\ket 4 & \ket + & \ket - & \ket 4 \\
\ket - & \ket 3 & \ket 4 & \ket +
\end{bmatrix}$   \\ \hline
$\QEG$ & \begin{tabular}[c]{@{}l@{}}\emph{Quantum solvable grids} --- grids having \\ at least one solution. Entries can be empty.\end{tabular} & 
$\begin{bmatrix} 
\cdot & \cdot & \ket 3 & \ket 4 \\
\cdot & \cdot & \ket + & \ket - \\
\ket 4 & \ket + & \ket - & \ket 3 \\
\ket - & \ket 3 & \ket 4 & \ket +
\end{bmatrix}$   \\ \hline
$\QUG$ & \begin{tabular}[c]{@{}l@{}}\emph{Quantum uniquely solvable grids} --- grids having \\ a unique solution. Entries can be empty.\end{tabular} & 
$\begin{bmatrix} 
\cdot & \cdot & \ket 3 & \ket 4 \\
\ket 3 & \ket 4 & \ket + & \ket - \\
\ket 4 & \ket + & \ket - & \ket 3 \\
\ket - & \ket 3 & \ket 4 & \ket +
\end{bmatrix}$   \\ \hline
$\QS$ & \begin{tabular}[c]{@{}l@{}}\emph{Quantum Squares} --- completely filled out \\ SudoQ grids. Each entry is a unit vector.\end{tabular} &  $\begin{bmatrix} 
\ket + & \ket - & \ket 3 & \ket 4 \\
\ket 3 & \ket 4 & \ket + & \ket - \\
\ket 4 & \ket + & \ket - & \ket 3 \\
\ket - & \ket 3 & \ket 4 & \ket +
\end{bmatrix}$   \\ \hline
\end{tabular}
\caption{Table gathering the different notation used for \emph{quantum} grids and squares. We use the notation $\ket \pm := (\ket 1 \pm \ket 2)/\sqrt 2$, which is standard in quantum information theory.}
\label{tab:notations-quantum}
\end{table}

\begin{remark}
    Since most of the times we shall consider the Sudoku constraints from Example \ref{ex:Sudoku-constraints}, when using the notation $\QS, \QUG, \QEG$ without the superscript, we mean the sets associated to $\mathcal C = \mathcal S$.
\end{remark}

\section{Classical grids and squares}\label{sec:classical}

From now on, we fix a basis of $\mathbb C^n$ which we call the \emph{computational basis} and we denote it by $\{\ket{k} \}_{k \in n}$. In this section, we shall the notions defined previously, restricted to projections on vectors in the computational basis. 

\begin{definition}
	A quantum grid $P \in \QG_n$ is called \emph{classical} if, for all $i,j \in [n]$, the projection $P_{ij}$ is either 0 or one of the $\ketbra{k}{k}$, for $k \in [n]$. The set of classical grids is denoted by $\CG_n$. Similarly, we define
	\begin{itemize}
	    \item $\CS_n^{\mathcal C} := \QS_n^{\mathcal C} \cap \CG_n$, the set of classical squares. These are the usual (solved) squares
	    \item $\CEG_n^{\mathcal C}$, the set of classical grids which admit \emph{at least one classical solution}
	    \item $\CUG_n^{\mathcal C}$, the set of classical grids which admit a \emph{unique classical solution}. 
	\end{itemize}

\end{definition}
This definition allows us to think about a canonical quantization morphism which takes as input a grid valuated in \( \mathbb{N}\) and associates an element of $\CG_n$ as output. Intuitively it corresponds to just replace the natural number \(k \) in some cell by the corresponding ket \(\ket{k} \) in the computation basis. See Table \ref{tab:notations-classical}.

\begin{lemma}\label{lem:quantization}
	Well-posed classical grids are precisely the grids in $\CUG_n$: they admits a unique quantum solution in \(\CS_n \), which is the quantization of its classical solution. 
\end{lemma}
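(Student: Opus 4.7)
The plan is to show that the quantization map $\iota$ introduced right before the statement is an order-isomorphism between integer-valued partial Sudoku grids (with the convention that $0$ marks an empty cell) and $\CG_n$, which restricts to a bijection between solved integer Sudoku squares and $\CS_n$. Once this is established, the lemma is just a transfer of the notions of "having a unique solution" across this bijection.

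First, I would make $\iota$ explicit: given $g \in \{0,1,\dots,n\}^{n \times n}$, set $\iota(g)_{ij} := \ketbra{g_{ij}}{g_{ij}}$ when $g_{ij} \neq 0$ and $\iota(g)_{ij} := 0$ otherwise. Since the rank-one projections $\ketbra{k}{k}$ for distinct $k \in [n]$ are pairwise distinct, $\iota$ is injective, and by the definition of a classical grid every element of $\CG_n$ is in its image. Next, I would check that $\iota$ preserves the filling partial order: $P \preceq Q$ in the integer setting (every non-zero clue of $P$ equals the corresponding entry of $Q$) translates cell-by-cell into $\iota(P) \preceq \iota(Q)$ in $\CG_n$, and the converse holds because $\iota$ is injective on entries.

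Second, I would verify that $\iota$ intertwines the Sudoku constraints. The key observation is that $\{\ket{1},\ldots,\ket{n}\}$ is an orthonormal basis of $\mathbb{C}^n$, so for any tuple $k_1, \ldots, k_n \in [n]$ one has
\begin{equation*}
\sum_{s=1}^n \ketbra{k_s}{k_s} = I_n \iff \{k_1,\ldots,k_n\} \text{ is a permutation of } [n].
\end{equation*}
Applying this to each row, column, and $n \times n$ sub-square of a classical Sudoku grid shows that $g$ is a valid (fully filled) classical Sudoku square precisely when $\iota(g) \in \CS_n$. Combined with order preservation, this means that $s$ is a classical solution of the integer puzzle $g$ if and only if $\iota(s) \in \CS_n$ satisfies $\iota(g) \preceq \iota(s)$.

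Finally I would read off the lemma: a classical puzzle $g$ is well-posed, i.e.\ admits exactly one integer solution $s$, if and only if $\iota(g) \in \CUG_n$, in which case its unique solution inside $\CS_n$ is $\iota(s)$. The statement is therefore essentially formal and there is no genuine obstacle; the only point requiring care is making sure $\iota$ is described on partial grids in a way that is compatible with the filling order and with both ends of the equivalence $\sum_s \ketbra{k_s}{k_s} = I_n \Leftrightarrow$ permutation of $[n]$, which is precisely the content of the preceding lemma applied to computational-basis projections.
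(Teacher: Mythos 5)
Your proposal is correct and follows essentially the same route as the paper: both arguments transfer existence and uniqueness of solutions along the quantization bijection $k \mapsto \ketbra{k}{k}$, using the fact that a family of computational-basis projections sums to $I_n$ exactly when the indices form a permutation of $[n]$. Your write-up is somewhat more explicit than the paper's (which invokes ``a sequence of one to one correspondences'' for uniqueness) in that you spell out compatibility with the filling partial order, but there is no substantive difference in approach.
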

\begin{proof}
We start with a well-posed classical grid valuated in \( \mathbb{N} \). This grid can be put bijectively in correspondence with an element \(A\) of $\CG_n$ through the canonical quantization morphism
$$\{1,2, \ldots, n\} \ni k \mapsto \ket k \in \mathbb C^n.$$
Now we have two things to prove:
\begin{enumerate}[label=(\roman*)]
    \item There exists a square \(B\) in $\CS_n$ which solves \(A\).
    \item This solution is unique in $\CS_n$.
\end{enumerate}
To prove (i) we can first remark that we know a square \(B_c\) valuated in \( \mathbb{N} \) which solves the grid we start with in a classical sense. We apply the quantization morphism to \(B_c\); our claim is that this quantum square \(B\) solves \(A\). This is obviously true because each row and column of \(B\) is made of permutations of the canonical projections \( \ketbra{k}{k}\) with no more than one occurrence in each row, column, sub-square. Hence \(B \in \CS_n \). Moreover we also have \(A \preceq B \) because the set of zeros of \( B \) is empty (and the pre-filled projectors inside \(A\) are unchanged).The uniqueness of the solution (point (ii) above) follows from the fact that starting from \(A \) its solution \(B \) is obtained through a sequence of one to one correspondences. Indeed \(A \) via the inverse of the quantization morphism is sent to a well posed grid valuated in \( \mathbb{N} \), call it \(A_c \), which has a unique solution \(B_c \) ; then \(B_c \) is sent to \(B \) bijectively through the quantization morphism again.
\end{proof}

\begin{table}[ht]
\begin{tabular}{|r|l|c|}
\hline
\rowcolor[HTML]{EFEFEF} 
Notation & Description & Example \\ \hline\hline
$\CG$ & \begin{tabular}[c]{@{}l@{}}\emph{Classical grids} --- matrices of vectors where entries \\ are either zero or an element of the canonical basis.\end{tabular} & 
$\begin{bmatrix} 
\cdot & \cdot & \cdot & \cdot \\
\cdot & \cdot & \cdot & \cdot \\
4 & 4 & 4 & 4 \\
2 & 3 & 4 & 1
\end{bmatrix}$\\ \hline
$\CEG$ & \begin{tabular}[c]{@{}l@{}}\emph{Classical solvable grids} --- grids having \\ at least one classical solution. Entries can be empty.\end{tabular} & 
$\begin{bmatrix} 
\cdot & \cdot & \cdot & \cdot \\
\cdot & \cdot & \cdot & \cdot \\
4 & 1 & 2 & 3 \\
2 & 3 & 4 & 1
\end{bmatrix}$\\ \hline
$\CUG$ & \begin{tabular}[c]{@{}l@{}}\emph{Classical uniquely solvable grids} --- grids having \\ a unique classical solution. Entries can be empty.\end{tabular} & 
$\begin{bmatrix} 
\cdot & \cdot & 3 & 4 \\
\cdot & \cdot & 1 & 2 \\
4 & 1 & 2 & 3 \\
2 & 3 & 4 & 1
\end{bmatrix}$\\ \hline
$\CS$ & \begin{tabular}[c]{@{}l@{}}\emph{Classical Squares} --- completely filled out Sudoku grids.\\Each entry is an element of the canonical basis.\end{tabular} &  $\begin{bmatrix} 
1 & 2 & 3 & 4 \\
3 & 4 & 1 & 2 \\
4 & 1 & 2 & 3 \\
2 & 3 & 4 & 1
\end{bmatrix}$   \\ \hline
\end{tabular}
\caption{Table gathering the different notation used for \emph{classical} grids and squares.}
\label{tab:notations-classical}
\end{table}

\section{SudoQ solutions}\label{sec:sudoq-solutions}

We discuss in this section different questions regarding the SudoQ game and its relation to its classical counterpart. We shall consider classical grids having 0, 1, or $\geq 2$ classical solutions and discuss the relation between their classical and quantum solutions. 

The first natural question is whether impossible classical grids can be solved in the quantum world. 

\begin{conjecture}\label{conj:classical-1-quantum-1}
	Every quantumly solvable classical grid is classically solvable: 
	$$\QEG_{n^2} \cap \CG_{n^2} = \CEG_{n^2}.$$
\end{conjecture}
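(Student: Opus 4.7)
The plan is to reformulate the conjecture in combinatorial (hypergraph matching) language and then attempt a Hall-type argument. Let $P \in \CG_{n^2}$ be a classical grid admitting a quantum solution $Q \in \QS_{n^2}$ with $P \preceq Q$. For each row $i$, the row clues are basis projections $\ketbra{k}{k}$ for $k$ ranging over a set $K_i \subseteq [n^2]$ of pairwise distinct labels (distinctness follows from orthogonality of the row of $Q$); the remaining $n^2 - |K_i|$ entries of $Q$ in row $i$ then form an orthonormal basis of the coordinate subspace $V_i := \mathrm{span}\{\ket{k} : k \notin K_i\}$. Defining $V^j$ and $V^{(I,J)}$ analogously for columns and blocks, each empty cell $(i,j)$ lying in block $(I,J)$ has $Q_{i,j}$ supported in the coordinate subspace $W_{i,j} := V_i \cap V^j \cap V^{(I,J)}$, namely the span of those $\ket{k}$ whose label $k$ is forbidden by none of the three constraints. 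A classical completion of $P$ then amounts to choosing, for each empty cell $(i,j)$, a label in the corresponding support set $S_{i,j} \subseteq [n^2]$ so that each row, column, and block becomes a permutation of $[n^2]$.

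Next I would handle each constraint in isolation via Hall's marriage theorem. Fix row $i$ and let $E_i$ be its empty cells. For any $T \subseteq E_i$, the $|T|$ orthonormal vectors $\{Q_{i,j}\}_{j \in T}$ live in the coordinate subspace spanned by $\bigcup_{j \in T} S_{i,j}$, so this union has cardinality at least $|T|$. This is exactly Hall's condition, so a classical permutation of the missing labels exists that respects every $S_{i,j}$; the same argument applies to each column and each block. The non-trivial step is then to combine these three families of local classical matchings into a single global classical Sudoku completion of $P$.

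I would then try a propagation/induction strategy on the number of empty cells: repeatedly identify any empty cell whose support $S_{i,j}$ is forced to a singleton by the row, column, or block constraints, fill it in, shrink the other $S_{i',j'}$ accordingly, and iterate. When no singleton can be deduced, one has to branch; ideally a refined deficient-set analysis on the combined row-column-block Sudoku hypergraph would produce a classical completion, possibly assisted by a group-theoretic or averaging argument using the full quantum solution $Q$ as a witness of fractional solvability (for instance, diagonal unitaries and permutations of the computational basis act on $\QS_{n^2}$ while fixing the clues, and one could try to show that some orbit of $Q$ meets $\CS_{n^2}$).

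The main obstacle, and no doubt the reason this is stated as a conjecture rather than a theorem, is precisely this global assembly step: classical Sudoku solvability is a genuinely combinatorial, NP-hard constraint, and the per-constraint Hall conditions extracted from $Q$ are strictly weaker than classical solvability. A successful proof would have to use in an essential way that $Q$ is a genuine quantum orthonormal basis rather than a mere fractional assignment — for instance by exploiting the inner products between entries in distinct but intersecting constraints (a row and the block through which it passes, etc.), or by a rank argument on some auxiliary matrix built from the entries of $Q$ — in order to extract enough global information to force the existence of a classical completion.
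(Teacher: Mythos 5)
You are attempting to prove a statement that the paper itself does not prove: Conjecture \ref{conj:classical-1-quantum-1} is stated as an open conjecture, with only the trivial ``$\supseteq$'' inclusion noted and supporting evidence that is purely numerical (runs of the Sinkhorn-type algorithm on Norvig's grid and on the thirteen minimal Shi Doku grids) together with one worked example exploiting the fact that all $3\times 3$ quantum Latin squares are commutative. So there is no paper proof to compare against, and your proposal should be judged on whether it closes the gap on its own. It does not, and to your credit you say so explicitly.

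That said, the partial content of your attempt is correct and worth recording. The support-confinement step is sound: if the clues in row $i$ occupy labels $K_i$, the remaining entries of a quantum solution $Q$ are rank-one projections summing to the orthogonal projection onto $V_i = \mathrm{span}\{\ket{k} : k \notin K_i\}$, hence each is supported in $V_i$; intersecting over the row, column, and block through a cell confines $Q_{i,j}$ to the coordinate subspace of non-forbidden labels. The per-constraint Hall condition also follows correctly from orthonormality. But, as you note, these conditions are exactly the ``naked/hidden subset'' deductions of classical Sudoku reasoning, and they are strictly weaker than classical solvability: there exist partial Latin squares satisfying every local Hall condition that admit no global completion, and completing partial Latin squares is NP-hard. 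Your proposal therefore stalls at precisely the point where a genuine new idea is required --- extracting global information from the off-diagonal inner products of $Q$ across intersecting constraints, or from a group action on the quantum solution set fixing the clues. Neither of your suggested mechanisms is developed far enough to assess, so the proposal should be regarded as a reformulation of the conjecture in combinatorial terms plus a correct identification of the obstruction, not as a proof or even a proof strategy with a plausible path to completion.
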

Above, the ``$\supseteq$'' inclusion is obviously true, hence we could rephrase the conjecture above using ``$\subseteq$'' instead of an equality.  

A different interesting question regarding the quantum generalization of Sudoku concerns the classical grids having exactly one classical solution (the newspaper grids). Such a grid admits the quantum solution corresponding to the classical one (obtained by associating to an integer the corresponding basis element). Do such grids admit more, purely quantum solutions?

\begin{conjecture}\label{conj:classical-1-quantum-more}
    Every classical grid having a unique classical solution does not have any extra quantum solutions
	$$\CUG_{n^2} \subseteq \QUG_{n^2} \cap \CG_{n^2}.$$
\end{conjecture}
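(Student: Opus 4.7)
The plan is to establish the conjecture by propagating classicality along the constraints of the Sudoku hypergraph, in a quantum analogue of the logical deduction techniques used to solve newspaper puzzles. Fix $P \in \CUG_{n^2}$ with unique classical solution $S_c$, and let $S$ denote the quantization of $S_c$ (the matrix of projections obtained by replacing each integer $k$ with $\ketbra{k}{k}$), which is a quantum solution of $P$ by Lemma~\ref{lem:quantization}. Given any other quantum solution $Q \in \QS_{n^2}$ with $P \preceq Q$, the goal is to show $Q = S$.

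The first step is a \emph{completion lemma}: in any constraint $c \in \mathcal S_{n^2}$, if $m$ of the entries of $Q$ indexed by $c$ equal classical basis projections $\ketbra{k_1}{k_1}, \ldots, \ketbra{k_m}{k_m}$ with pairwise distinct labels, then the unit vectors on which the remaining $n^2 - m$ projections of $Q$ on $c$ project form an orthonormal basis of $\mathrm{span}\{\ket{k} : k \notin \{k_1, \ldots, k_m\}\}$. This follows immediately from the ONB requirement attached to $c$. In the extremal case $m = n^2 - 1$, the last cell is forced to be the classical basis vector corresponding to the missing label, which is the quantum analogue of a \emph{naked single}.

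Iterating this lemma, one can formalise quantum versions of the standard Sudoku deduction steps (naked and hidden singles, then naked and hidden pairs, triples, and so on). Each such deduction applied to $Q$ produces a new classical entry matching the value forced by the same deduction applied to $P$. Hence if $P$ admits a chain of such logical deductions solving it, then $Q$ is step-by-step pinned to $S$, proving the conjecture in that case.

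The main obstacle is the existence of puzzles $P \in \CUG_{n^2}$ whose uniqueness cannot be certified by any finite chain of such deductions --- puzzles that, in Sudoku parlance, require case-analysis or backtracking. For these I would try a contradiction-based strategy: assuming $Q \neq S$, produce a second classical solution of $P$ distinct from $S_c$, contradicting $P \in \CUG_{n^2}$. Two concrete avenues are (a) averaging $Q$ over the action of the diagonal unitary group on $\mathbb C^{n^2}$ to extract a classical object with the same clues, and (b) analysing the $*$-algebra generated by $\{Q_{ij}\}$ inside the free hypergraph $C^*$-algebra of \cite{fritz2020curious} with the clues of $P$ imposed, arguing that classical uniqueness forces this algebra to be abelian. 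This last step is where the real difficulty lies: the averaging approach can easily destroy the clue conditions, and the abelianisation approach would likely demand new input from the representation theory of the Sudoku $C^*$-algebra or from the theory of quantum graph homomorphisms.
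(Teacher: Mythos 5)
First, a point of calibration: the statement you are addressing is stated in the paper as an open \emph{conjecture}, not a theorem. The paper offers no proof of it --- only numerical evidence (Section 6) and two worked examples in Section 4 showing that the analogous question for grids with two or more classical solutions can go either way. So there is no argument of the authors to compare yours against; the only relevant question is whether your proposal closes the conjecture, and it does not.

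Your first step is sound and in fact yields a correct partial result worth isolating. The completion lemma is right: if a constraint $c \in \mathcal S_{n^2}$ contains distinct classical clues $\ketbra{k_1}{k_1},\dots,\ketbra{k_m}{k_m}$, then positivity of the remaining projections together with $\sum_{(i,j)\in c} Q_{ij} = I$ forces them into $\mathrm{span}\{\ket{k} : k\notin\{k_1,\dots,k_m\}\}$; the case $m=n^2-1$ pins the last cell, and hidden singles also quantize (if all but one cell of $c$ satisfies $\bra{k}Q_{ij}\ket{k}=0$, the last one must equal $\ketbra{k}{k}$). Hence Conjecture \ref{conj:classical-1-quantum-more} holds for every grid solvable by singles-type constraint propagation. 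But this is exactly where the content of the conjecture begins, and your own text concedes the gap: for puzzles whose unique solvability is certified only by case analysis, neither fallback works as described. Averaging $Q$ over the diagonal unitary group replaces each $Q_{ij}$ by its diagonal part, which is a diagonal density matrix rather than a rank-one projection; one can check that this pinched array equals the quantization of the classical solution only if $Q$ was already classical, so a non-classical $Q$ produces a genuinely \emph{fractional} doubly-stochastic-type solution --- but extracting from it a second honest classical solution is precisely the Birkhoff--von Neumann integrality issue the paper itself flags as open in Section \ref{sec:mixed}. The abelianization route founders on the fact, used by the paper in its second example of Section \ref{sec:sudoq-solutions}, that commutativity of magic unitaries holds in dimension $3$ but fails from dimension $4$ on: the purely quantum $2\times 2$ block $\bigl(\begin{smallmatrix} x & y \\ y & x\end{smallmatrix}\bigr)$ in the paper's first example shows the Sudoku relations alone do not force the algebra generated by the $Q_{ij}$ to be abelian, so any such argument must exploit uniqueness of the classical solution in an essential and currently unavailable way. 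In short: your proposal proves the conjecture for ``easy'' puzzles and correctly identifies, but does not bridge, the gap for the rest; the statement remains open, as it is in the paper.
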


We show in the two following examples that the situation is more complicated in the case where a classical grid admits $\geq 2$ classical solutions.

\begin{example}
	The exist non-uniquely solvable classical Sudoku grids which admit purely quantum solutions: 
	$$\begin{bmatrix}
	1 & 2 & 3 & 4 \\
	3 & 4 & \cdot & \cdot \\
	4 & 3 & \cdot & \cdot \\
	2 & 1 & 4 & 3
	\end{bmatrix}$$
	For any orthonormal basis $\{x,y\}$ of $\mathbb C \ket 1 \oplus \mathbb C \ket 2$, the following is a solution:
	$$\begin{bmatrix}
	x & y\\
	y & x
	\end{bmatrix}.$$
\end{example}

\begin{example}
	The exist non-uniquely solvable classical Sudoku grids which do not admit purely quantum solutions: 
\begin{sudoku}
	|1| | |4|5|6|7|8|9|.
	|4|5|6|7|8|9|1|2|3|.
	|7|8|9|1|2|3|4|5|6|.
	|2| | |5|6|4|8|9|7|.
	|5|6|4|8|9|7|2|3|1|.
	|8|9|7|2|3|1|5|6|4|.
	|3| | |6|4|5|9|7|8|.
	|6|4|5|9|7|8|3|1|2|.
	|9|7|8|3|1|2|6|4|5|.
\end{sudoku}
	
The grid above admits the following two classical solutions:
\begin{equation}\label{eq:2-sol-classical}
\begin{bmatrix}
2 & 3 \\
3 & 1 \\
1 & 2
\end{bmatrix} \quad \text{ and } \quad 
\begin{bmatrix}
3 & 2 \\
1 & 3 \\
2 & 1
\end{bmatrix}.
\end{equation}
However, this grid admits no purely quantum solution. Indeed, let $\{a_{ij}\}_{i\in[3], j \in [2]}$ be the missing values from the grid above. From the Sudoku constraints, it follows that 
$$\begin{bmatrix}
	a_{11} & a_{12} & \ketbra{1}{1} \\
	a_{21} & a_{22} & \ketbra{2}{2} \\
	a_{31} & a_{32} & \ketbra{3}{3}
\end{bmatrix}$$
is a quantum Latin square of size 3. But all $3 \times 3$ quantum Latin squares (or magic unitaries) are commutative, see \cite{wang1998quantum} or \cite[Section 2.2]{lupini2017nonlocal} for a short elementary proof. Hence, the rank-one projections $a_{ij}$ must belong to the computational basis, and thus the missing elements must be one of the two possible classical solutions from \eqref{eq:2-sol-classical}.
\end{example}

We consider now an example of an \emph{impossible Sudoku grid}, that is a classical grid which does not admit a classical solution. We shall prove that, although it does not have any quantum solutions either, the quantum square which minimizes the error (w.r.t.~the SudoQ property) yields an error term which is strictly smaller than the corresponding classical error. In other words, there exists a better approximate quantum solution than any classical one. 

Let us first define, for a quantum square, its distance with respect to the SudoQ constraint lists. See Section \ref{sec:sudoq-algorithm}, eq.~\eqref{eq:sinkhorn-error-algorithm} for the same definition used in the SudoQ algorithm, 

\begin{definition}
    Given a quantum grid $X \in \QG_{n^2}$, we define its \emph{SudoQ error} by
    $$\mathcal E(X) := \max_{c \in \mathcal S_{n^2}} \|\sum_{(i,j) \in c} \ketbra{x_{ij}}{x_{ij}} - I_{n^2}\|_2,$$
    where $\mathcal S_{n^2}$ is the SudoQ constraint list from Example \ref{ex:Sudoku-constraints}. 
    
    For a quantum grid $A \in \QG_{n^2}$, we define its \emph{SudoQ score} as
    $$\mathcal E_q(A) := \min_{Y \in \QS_{n^2}^{\mathcal S}, \, A \prec Y} \mathcal E(Y).$$

    For a classical grid $B \in \CG_{n^2}$, we define its \emph{Sudoku score} as
    $$\mathcal E_c(B) := \min_{Z \in \CS_{n^2}^{\mathcal S}, \, B \prec Z} \mathcal E(Z).$$
\end{definition}

Even if the two problems are not the same, the situation could be compared to the problem of finding for instance a graph bi-colouring for a graph made of a unique odd-cycle : we can replace colors by projectors and the bi-colouring condition by an orthogonality relation ; in the setting of non-local graph colouring games there exists quantum strategies which are better than classical strategies but we cannot find perfect strategies. 

Consider the following classical square, for $n=2$:
$$G = \begin{bmatrix}
	3 & \cdot & 1 & 4 \\
	\cdot & 4 & 3 & 2 \\
	1 & 2 & 4 & 3 \\
	4 & 3 & 2 & 1
	\end{bmatrix}$$
and denote by $x,y\in \mathbb C^4$ the missing vectors in the top, resp.~bottom row.

\begin{lemma}
    For the square $G$ above, the minimum error $\mathcal E_c(G)$ of a classical square is $\sqrt 2$, achieved, e.g., by $x=y=2$. The minimum quantum error is $\mathcal E_q(G) = 1$, achieved, e.g., by 
    $$x = \ket + = \frac{1}{\sqrt 2}(\ket 1 + \ket 2) \quad \text{and} \quad y = \ket - = \frac{1}{\sqrt 2}(\ket 1 - \ket 2).$$
\end{lemma}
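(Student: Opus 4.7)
The plan is to first observe that every constraint not involving the unknowns $x$ or $y$ is already satisfied on the nose: the top-right, bottom-left, and bottom-right $2\times 2$ sub-squares are permutations of the computational basis, and so are rows $3,4$ and columns $3,4$. Hence the SudoQ error $\mathcal E(X)$ reduces to the maximum over only five constraints, each with a simple form. Writing $P_x=\ketbra{x}{x}$ and $P_y=\ketbra{y}{y}$, these are:
row $1$ requires $P_x=\ketbra{2}{2}$, row $2$ requires $P_y=\ketbra{1}{1}$, column $1$ requires $P_y=\ketbra{2}{2}$, column $2$ requires $P_x=\ketbra{1}{1}$, and the top-left sub-square requires $P_x+P_y=\ketbra{1}{1}+\ketbra{2}{2}$. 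Once this reduction is made, the lemma becomes a completely elementary exercise in two unit vectors.

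For the lower bounds, I would use the identity $\|\ketbra{p}{p}-\ketbra{q}{q}\|_2^2=2(1-|\langle p,q\rangle|^2)$ valid for unit vectors $p,q$. Applying it to the pair (row $1$, column $2$), which both constrain $P_x$, gives $e_1^2+e_4^2 = 4-2(|\langle x|1\rangle|^2+|\langle x|2\rangle|^2)\ge 2$, because $\{\ket 1,\ket 2\}$ is orthonormal and $\|x\|=1$. Hence $\max(e_1,e_4)\ge 1$, and so $\mathcal E(X)\ge 1$ unconditionally; this yields $\mathcal E_q(G)\ge 1$. In the classical case $x$ is a computational basis vector, so each of $|\langle x|1\rangle|^2$ and $|\langle x|2\rangle|^2$ lies in $\{0,1\}$, and they cannot both equal $1$; the same identity then forces $\max(e_1,e_4)=\sqrt 2$, and therefore $\mathcal E_c(G)\ge\sqrt 2$.

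It remains to produce matching upper bounds by exhibiting explicit completions. For the classical side, taking $x=y=\ket 2$ makes row $1$ and column $1$ exact while the three remaining constraints each contribute exactly $\|\ketbra{2}{2}-\ketbra{1}{1}\|_2=\sqrt 2$, giving $\mathcal E_c(G)\le\sqrt 2$. For the quantum side, with $x=\ket +$ and $y=\ket -$ one computes $|\langle\pm|1\rangle|^2=|\langle\pm|2\rangle|^2=1/2$, so each of the four row and column errors equals exactly $1$; and $P_+ + P_- = \ketbra{1}{1}+\ketbra{2}{2}$, so the sub-square constraint is satisfied exactly. Hence $\mathcal E_q(G)\le 1$, and combining with the lower bounds we are done.

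There is no real obstacle in this proof: the only thing to be careful about is verifying that the five constraints listed above are indeed the only ones affected, and maintaining consistent indexing between the rows/columns of $G$ and the row/column constraints. The rest is a one-line inequality plus a verification of two explicit assignments.
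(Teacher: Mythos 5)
Your proposal is correct and follows essentially the same route as the paper: the same reduction to the five constraints involving $x$ and $y$, and the same key inequality $\|\ketbra{x}{x}-\ketbra{1}{1}\|_2^2+\|\ketbra{x}{x}-\ketbra{2}{2}\|_2^2\geq 2$ for the quantum lower bound. You are somewhat more thorough than the paper, which merely asserts the classical bound $\sqrt 2$ as a general fact about unsolvable grids and leaves the verification of the two explicit completions to the reader.
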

\begin{proof}
It is clear that the grid $G$ is not classically solvable, and for any unsolvable grid the classical error is at least $\sqrt 2$. For the lower bound in the quantum case, the (quantum) error reads 
\begin{align*}\mathcal E(G) = \max\big\{ &\|\ketbra{x}{x} - \ketbra{1}{1}\|_2, \|\ketbra{x}{x} - \ketbra{2}{2}\|_2,\\ &\|\ketbra{y}{y} - \ketbra{1}{1}\|_2, \|\ketbra{y}{y} - \ketbra{2}{2}\|_2,\\ &\|\ketbra{x}{x} + \ketbra{y}{y} - I_2\|_2 \big\} .
\end{align*}
The five terms above correspond, respectively, to the SudoQ constraints of the second column, the first row, the second row, the first column, and the top-left square. Notice that
$$\|\ketbra{x}{x} - \ketbra{1}{1}\|_2^2 + \|\ketbra{x}{x} - \ketbra{2}{2}\|_2^2 = 2(1-|\langle 1, x\rangle|^2) + 2(1-|\langle 2, x\rangle|^2) \geq 2,$$
hence
$$\max\big\{ \|\ketbra{x}{x} - \ketbra{1}{1}\|_2^2, \|\ketbra{x}{x} - \ketbra{2}{2}\|_2^2\big\}\geq 1,$$
proving the claim. 
\end{proof}

\section{A Sinkhorn-like algorithm for SudoQ}\label{sec:sudoq-algorithm}

We describe in this section an algorithm, based on iterative scaling, which aims to produce a SudoQ square starting from an incomplete grid. The algorithm we present can be understand as a ``quantization'' of the one given in \cite{moon2009sinkhorn}. 

Let us first present in an informal manner our algorithm. Being based on Sinkhorn alternating normalization (see \cite{idel2016review} for a review), the main idea is to satisfy the constraints one after the other, in a cyclical order:
\begin{enumerate}
	\item Initialize the empty cells of a SudoQ grid with random vectors
	\item For each row: normalize the vectors corresponding to empty cells in the row in such a way that the row sum is the identity
	\item For each column: normalize the vectors corresponding to empty cells in the column in such a way that the column sum is the identity
	\item For each sub-square: normalize the vectors corresponding to empty cells in the sub-square in such a way that the sub-square sum is the identity
	\item Repeat the last three steps until all the constraints are approximately satisfied or the maximal number of steps has been reached. 
\end{enumerate}

We give now the full description of our Sinkhorn-based algorithm for solving SudoQ puzzles. The algorithm takes the following input data
\begin{itemize}
	\item $A$, a classical Sudoku grid, containing elements from $\{0,1,2, \ldots, n^2\}$, the 0 value corresponding to empty cells
	\item $\sigma$, the strength, a meta-parameter which can be chosen empirically and controls the intensity of the alternating normalizations
	\item $\epsilon$, the desired precision with which all the Sudoku constraints must be satisfied
	\item $I_{\max}$, the maximum number of iterations the algorithm will perform.
\end{itemize}

The algorithm will either return a SudoQ grid $x$, $\epsilon$-satisfying all the Sudoku constraints, or ``failure'' if no such square could be found before $I_{\max}$ Sinkhorn iterations. 

\begin{algorithm}
	\caption{Algorithm for solving SudoQ}
	\begin{algorithmic}[1]
		\Procedure{SolveSudoQ}{$A,\sigma,\epsilon, I_{\max}$}
		\State $x \gets \text{random $n^2 \times n^2 \times n^2$ complex Gaussian tensor}$\Comment{random initialization}
		\State $iter \gets 0$
		\Repeat
		\State $iter \gets iter + 1$
		\For{$c \in \textsc{ConstraintListSudoku}(n)$}
			\State $S \gets 0$
			\State $T \gets \mathds 1$
			\For{$(i,j) \in c$} 
				\If {$A_{ij} = 0$}
					\State $S \gets S +  \ketbra{x_{ij}}{x_{ij}}$ \Comment{empty cells}
				\Else
					\State $T \gets T- \ketbra{x_{ij}}{x_{ij}}$ \Comment{orth.~of filled cells}
				\EndIf
			\EndFor

			\State $R \gets \textsc{OptimalTransformation}(S,T)$ \Comment{transforms $S \to T$}
			\State $\tilde R \gets \sigma R + (1-\sigma)\mathds 1$ \Comment{control step size by interpolating}
			\For{$(i,j) \in c$} 
				\If {$A_{ij} = 0$}
					\State $x_{ij} \gets \tilde R x_{ij}$ \Comment{apply the transformation}
				\EndIf
			\EndFor
		\EndFor
		
		\Until{$\textsc{SinkhornError}(x) < \epsilon$ \textbf{or} $iter > I_{\max}$}\Comment{end conditions}
		\If {$\textsc{SinkhornError}(x) < \epsilon$}
		\State \Return{ $x$} \Comment{SudoQ solved}
		\Else
		\State \Return{$\text{failure}$} \Comment{too many iterations, at least one constraint not satisfied}
		\EndIf
		\EndProcedure
	\end{algorithmic}
	\label{alg:SudoQ}
\end{algorithm}

Algorithm \ref{alg:SudoQ} references three functions: $\textsc{ConstraintListSudoku}(n)$, $\textsc{OptimalTransformation}(S,T)$, and $\textsc{SinkhornError}(x)$. The function $\textsc{ConstraintListSudoku}$ returns a list corresponding to the $3n$ constraints of the Sudoku puzzle of size $n^2$: one for each row, column, and sub-square, see \eqref{eq:sudoku-constraints}. The function $\textsc{SinkhornError}(x)$ returns the largest error (computed in Frobenius norm) for the SudoQ square $x$ with respects to the Sudoku constraints: 
\begin{equation}\label{eq:sinkhorn-error-algorithm}
\textsc{SinkhornError}(x) = \max_{c \in \textsc{ConstraintListSudoku}(n)} \left\|\mathds 1_{n^2} - \sum_{(i,j) \in c} \ketbra{x_{ij}}{x_{ij}}\right\|.
\end{equation}

Finally, let us discuss in detail the function $\textsc{OptimalTransformation}(S,T)$, which is the key operation in our algorithm, both from a mathematical and from a computation complexity perspective. This function takes two projections $S,T$ of the same rank and computes the transformation $R$ satisfying $RSR^* = T$ which is closest to the identity on the support of $S$. We refer the reader interested in the details and the linear algebra behind this procedure to \cite[Lemma 3.4]{nechita2019sinkhorn}. Note we do not use precisely the mapping $R$ in our algorithm, but an interpolation $\tilde R$ between $R$ and the identity map, with $\sigma$ quantifying the contribution of $R$. This allows us to control the size of each step of the Sinkhorn alternating normalization procedure. 

\section{Numerical experiments}\label{sec:experiments}

We have tested our algorithm on most of the grids from \url{http://norvig.com/sudoku.html}. We report in this section the results obtained, commenting on their significance. We would like to point out, from the beginning, that the point of our algorithm is not to solve classical Sudoku puzzles, but to investigate whether non-classical (or purely quantum) solutions exist. Python code and the different grids analyzed here are available at \url{https://github.com/inechita/SudoQ}.

First, we have ran the Algorithm \ref{alg:SudoQ} multiple times on the following grid:  
\begin{sudoku}
	| | |3| |2| |6| | |.
	|9| | |3| |5| | |1|.
	| | |1|8| |6|4| | |.
	| | |8|1| |2|9| | |.
	|7| | | | | | | |8|.
	| | |6|7| |8|2| | |.
	| | |2|6| |9|5| | |.
	|8| | |2| |3| | |9|.
	| | |5| |1| |3| | |.		
\end{sudoku}
First, to determine the optimal value of the strength parameter $\sigma$, we have ran our balancing algorithm for $N = 1000$ times for each value of $\sigma \in \{s/10 \, : \, 1 \leq s \leq 10\}$. We report the results in Figure \ref{fig:results-norvig}.

\begin{figure}[ht]
	\centering
	\includegraphics[width=0.4\linewidth]{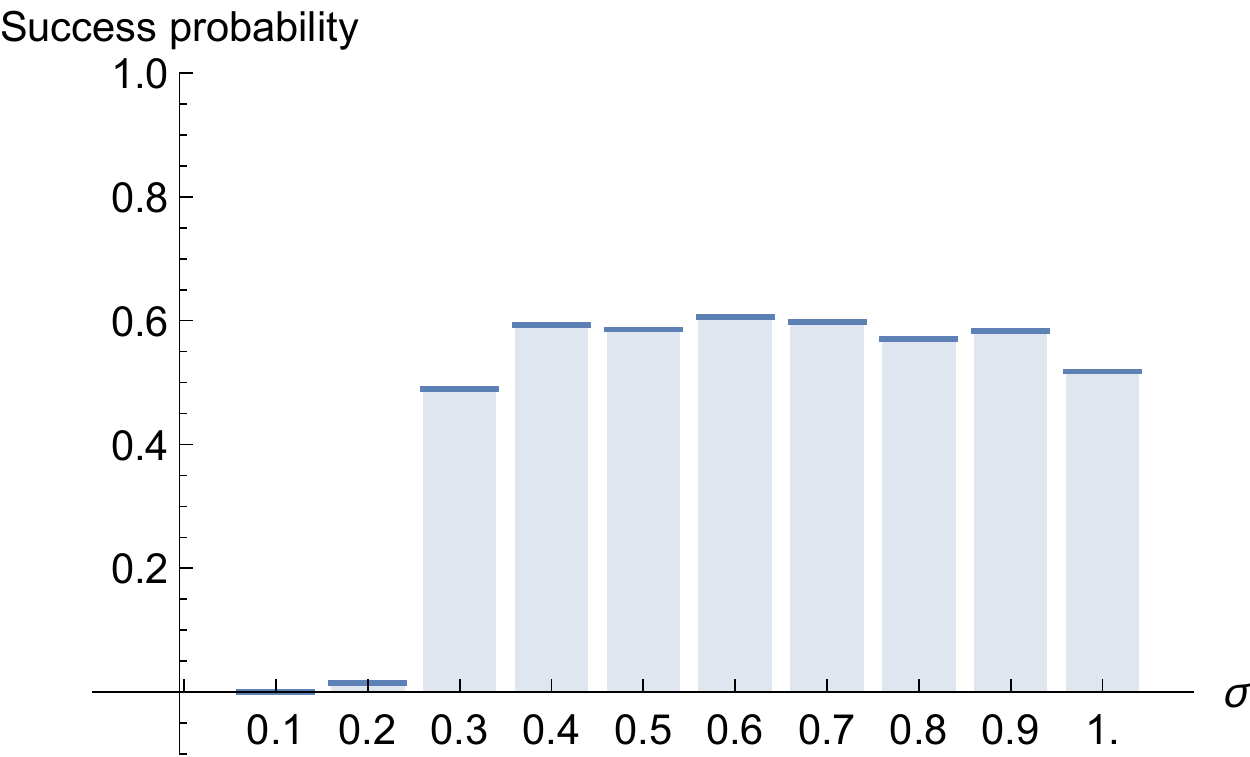} \qquad
	\includegraphics[width=0.4\linewidth]{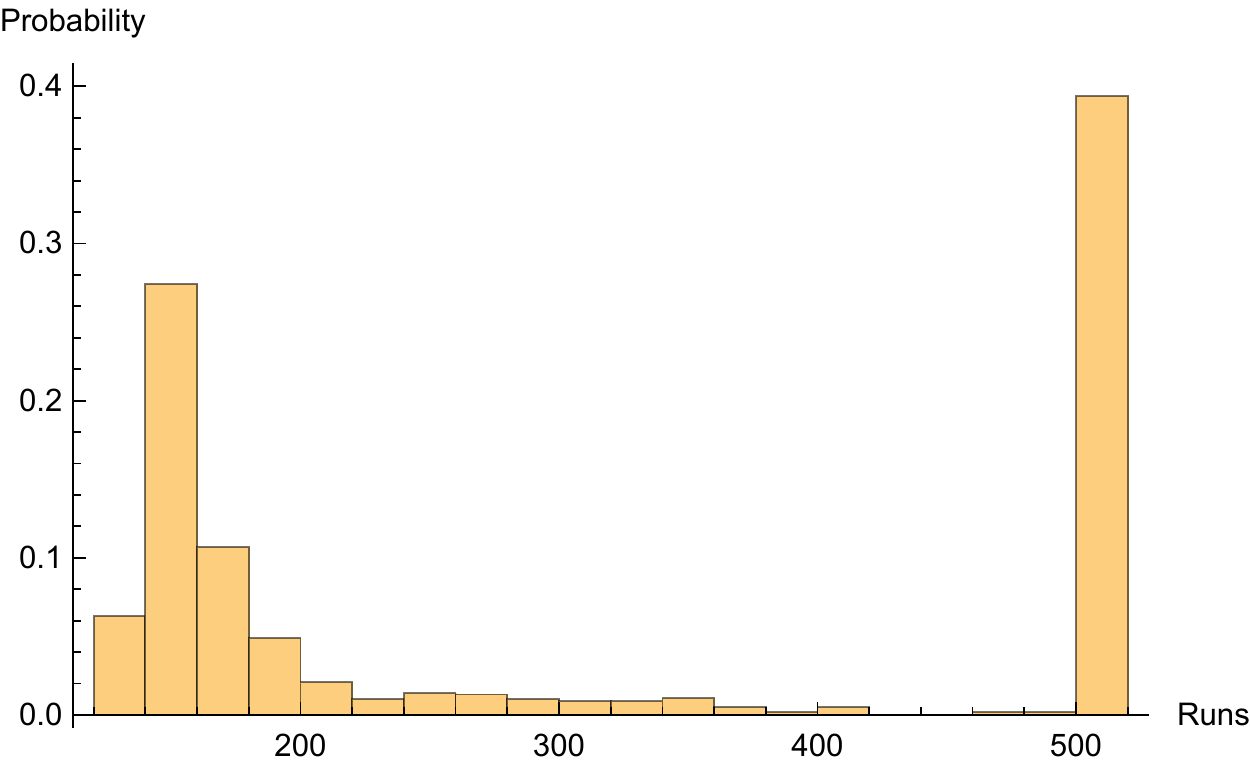}
	\caption{Numerical results for 1000 runs for each strength parameter. Left panel: probability of success (algorithm converges before 500 iterations) for each value of the strength parameter $\sigma$. Right panel: number of iterations for $\sigma_* = 0.6$.}
	\label{fig:results-norvig}
\end{figure}

Note that our algorithm for solving SudoQ does not succeed for every instance of the random initialization. This is a common feature of this type of algorithm based on alternative normalization, see e.g.~\cite[Table 3.1]{chi2012techniques}. We notice that the success probability depends on the strength parameter $\sigma$, and we choose (empirically) the optimal value, here $\sigma_* = 0.6$, for which the probability that the algorithm converges before ${I_{\max}} = 500$ is $60.6\%$. For this particular value, we plot the distribution of the number of iterations the algorithm ran for. Conditioning on the algorithm finishing before ${I_{\max}}$ steps, we notice a peak in the small values (approx 100). This suggests that if the algorithm does not converge quickly, it will not converge (meaning it will be stuck in some kind of local minimum of the loos function $\textsc{SinkhornError}$); the same phenomenon has been seen for other values of $\sigma$. Finally, let us point out that we have not found non-classical solutions to this grid, providing evidence for Conjecture \ref{conj:classical-1-quantum-1}. 

\bigskip

Further evidence for Conjecture \ref{conj:classical-1-quantum-1} is provided by the following exhaustive analysis of $2 \times 2$ grids, called \emph{Shi Doku} \cite{sudopedia}. For this simplified version of the game, it is easy to enumerate the hardest instances of the problem. 

\begin{definition}
    A classical grid $G \in CG_{n^2}$ is called \emph{minimal} if it admits a unique classical solution (i.e.~$G \in CUG_{n^2}$) and, moreover, all the grids obtained from $G$ by removing any of the clues lose this property (i.e.~they admit multiple classical solutions). 
\end{definition}

It turns out that there minimal number of clues of a uniquely solvable Shi Doku grid is 4, and that there are 13 such grids \cite{lass2012minimal}. We have ran our algorithm 100 times on each of these 13 grids, and found no counterexamples to Conjecture \ref{conj:classical-1-quantum-1}: whenever the algorithm terminated, the solution found was machine-precision-close to the (unique) classical one. We report the number of iterations (for the 1300 runs) in Figure \ref{fig:shidoku}. Out of the 1300 runs, Algorithm \ref{alg:SudoQ} did not terminate before 500 iterations on 6 instances, all corresponding to the following grid (we refer the reader to \href{https://github.com/inechita/SudoQ}{GitHub repository} accompanying the paper for the input grids and the output file): 
$$G_{05} = \begin{bmatrix}
	\cdot & \cdot & \cdot & \cdot \\
	\cdot & \cdot & \cdot & 1 \\
	\cdot & 2 & \cdot & \cdot \\
	3 & \cdot & 4 & \cdot
	\end{bmatrix}$$

\begin{figure}[ht]
	\centering
	\includegraphics[width=0.6\linewidth]{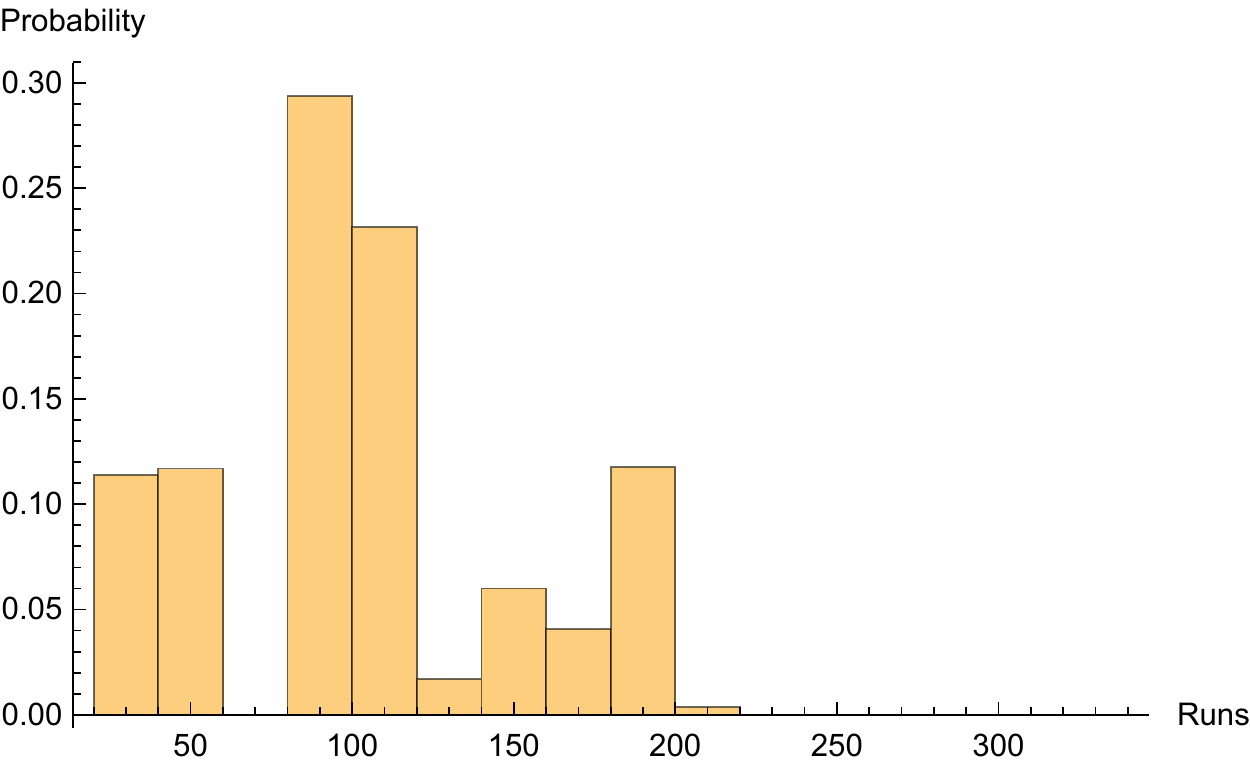}
	\caption{Numerical results for 100 runs of each of the 13 minimal Ski Doku ($2 \times 2$) grids with 4 clues.}
	\label{fig:shidoku}
\end{figure}

\section{Mixed SudoQ}\label{sec:mixed}

We consider in this section a more general framework, where we replace the unit-rank projections from Definition \ref{def:quantum-grid} with \emph{mixed quantum states} (i.e.~\emph{density matrices}). We recall that these are positive semidefinite operators of unit trace: $\rho \geq 0$ and $\Tr \rho =1$. This setting is a generalization of both :
\begin{itemize}
	\item the ``Random Sudoku'' grids from \cite[Section IV]{moon2009sinkhorn}, where the authors consider (classical) Sudoku grids where the entries are no longer perfectly determined, but might take one of several values with pre-assigned probabilities. Our generalization replaces thus probability vectors (or diagonal density matrices) by arbitrary density matrices. 
	\item the SudoQ grids and squares introduced in this paper in Section \ref{sec:grids}. Our generalization replaces pure quantum states (vectors) by mixed quantum states (density matrices)
\end{itemize}

In the context of the current work, there is a clear advantage in working with arbitrary density matrices rather than pure states. Pure states in our context can be represented by vectors on the unit sphere of the Hilbert space (up to a phase), whereas mixed states form a convex set. Given the linear structure that comes with a convex body, we can therefore expect that mixed states are easier to manipulate than pure states since we could apply the machinery of convex optimization.

It is interesting to consider to which extent (random) mixed states put in the empty cells of a classical grid \(X \in \mathrm{CG}_n \) are enforced to become pure states through Sinkhorn scaling procedure. This question is the quantum analog of the one answered in \cite{moon2009sinkhorn} and \cite{gunther_entropy_2012}. In these papers the authors define a ``constraint matrix'' which must be a permutation matrix when the constraints of the classical Sudoku grid are all satisfied. This permutation matrix is then approximated by a bistochastic matrix obtained via Sinkhorn scaling. Often the constraints imposed by the pre-filled entries automatically imply that the bistochastic matrix is effectively a permutation matrix at the end of the Sinkhorn scaling, but this is not a general fact. According to Birkhoff-von-Neumann theorem, permutation matrices are the extremal points of the convex polytope formed by bistochastic matrices. Therefore the authors are lead to introduce notions related to convex optimization in order to understand the convergence of Sinkhorn like algorithm for Sudoku. In our  context, the goal is almost the same, we would like to understand when (non-commutative) Sinkhorn scaling and quantum Sudoku constraints enforce an interior point of a convex set (in our case an arbitrary density matrix) to be moved to a pure state. We leave these questions open for future work.

\section{Application: quantum Sudoku code}\label{sec:application}

The \emph{Sudoku code} is a non-linear error correcting code used mainly for erasure channels. In \cite{moon_multiple_2006} the authors argue that a Sinkhorn like algorithm can be used in the problem of decoding Sudoku code, their method being strongly related to Bayesian belief propagation.
In this section, after recalling some definitions on quantum erasure channels, we introduce a quantum version of Sudoku code. Then we highlight the fact that the algorithm described in Section \ref{sec:sudoq-algorithm} can be used in the decoding step.

First, we consider the following quantum alphabet: \( \mathcal{X}_{n}=\{\ket{1}, \ket{2},...,\ket{n^2} \} \), which are just the kets of the canonical basis of \( \mathbb{C}^{n^2} \). We introduce another quantum alphabet \( \mathcal{Y}_n = \{ \mathcal{X}_{n}, \ket{x} \} \), the vector \( \ket{x} \in \mathbb{C}^{n^2+1} \) being orthogonal to \( \mathbb{C}^{n^2} \).
\begin{definition}
    A quantum erasure channel is a completely positive trace preserving map given by:
    \begin{equation}
        \ketbra{\phi}{\phi} \mapsto (1-\epsilon)\ketbra{\phi}{\phi}+\epsilon \ketbra{x}{x},
    \end{equation}
    where \(\ket{\phi} \in \mathcal{X}_{n} \) and its image belongs to \( \mathcal{Y}_n \subset \mathbb{C}^{n^2} \oplus \mathbb{C}\ket{x} = \mathbb{C}^{n^2+1} \).
\end{definition}
In this definition the vector \( \ket{x} \) represents the lost of a symbol belonging to \( \mathcal{X}_{n} \) (it is the quantum analog of the symbol ``?'' for a classical erasure channel) \cite{delfosse_upper_2012}.\\
We will now consider an information sequence of symbols belonging to \( \mathcal{X}_{n} \). This sequence is mapped to an element \(X \) of \( \CS_{n^2} \), which means we think about the set \( \CS_{n^2} \) as a code-words alphabet. Therefore the goal of such a procedure is to encode the redundant information into the Sudoku constraints. In order to draw some comparisons with other error correcting codes (such as Gallager code) it is usual to represent Sudoku code via a bipartite graph called Tanner graph. This graph contains two sets of nodes, the first set corresponds to Sudoku constraints given by Example \ref{ex:Sudoku-constraints}, whereas the second set is simply all the one dimensional projectors elements of \(X \), a projector node is linked to a constraint node when the projector is submitted to the corresponding constraint.
\begin{example}
Tanner graph for an element of $\CS_{4}$ : \\
\begin{tikzpicture}[thick,
  fsnode/.style={draw,square},
  fsnode/.style={fill=myblue},
  ssnode/.style={draw,circle},
  ssnode/.style={fill=mygreen},
  -,shorten >= 3pt,shorten <= 3pt
]

\begin{scope}[start chain=going right,node distance=7mm]
\foreach \i in {1,2,...,12}
  \node[fsnode,on chain] (f\i) [label=above: $C_{\i}$] {};
\end{scope}

\begin{scope}[xshift=-1cm,yshift=-4cm,start chain=going right,node distance=7mm]
\foreach \i in {1,2,...,4}
\foreach \j in {1,2,...,4}
  \node[ssnode,on chain] (s\i \j) [label=below: $P_{\i \j}$] {};
\end{scope}

\node [myblue,fit=(f1) (f12)] {};
\node [mygreen,fit=(s11) (s44)] {};

\draw (f1) -- (s11);
\draw (f1) -- (s12);
\draw (f1) -- (s13);
\draw (f1) -- (s14);

\draw (f2) -- (s21);
\draw (f2) -- (s22);
\draw (f2) -- (s23);
\draw (f2) -- (s24);

\draw (f3) -- (s31);
\draw (f3) -- (s32);
\draw (f3) -- (s33);
\draw (f3) -- (s34);

\draw (f4) -- (s41);
\draw (f4) -- (s42);
\draw (f4) -- (s43);
\draw (f4) -- (s44);

\draw (f5) -- (s11);
\draw (f5) -- (s21);
\draw (f5) -- (s31);
\draw (f5) -- (s41);

\draw (f6) -- (s12);
\draw (f6) -- (s22);
\draw (f6) -- (s32);
\draw (f6) -- (s42);

\draw (f7) -- (s23);
\draw (f7) -- (s13);
\draw (f7) -- (s33);
\draw (f7) -- (s43);

\draw (f8) -- (s14);
\draw (f8) -- (s24);
\draw (f8) -- (s34);
\draw (f8) -- (s44);

\draw (f9) -- (s11);
\draw (f9) -- (s12);
\draw (f9) -- (s21);
\draw (f9) -- (s22);

\draw (f10) -- (s13);
\draw (f10) -- (s14);
\draw (f10) -- (s23);
\draw (f10) -- (s24);

\draw (f11) -- (s31);
\draw (f11) -- (s32);
\draw (f11) -- (s41);
\draw (f11) -- (s42);

\draw (f12) -- (s33);
\draw (f12) -- (s34);
\draw (f12) -- (s43);
\draw (f12) -- (s44);
\end{tikzpicture}
\end{example}

Then the element \( X \) is sent through a noisy quantum channel modelised by a quantum erasure channel.  After the transmissions through the quantum erasure channel some symbols in the sequence are lost, but thanks to the Sudoku constraints and the remaining unaffected symbols we could be able to recover the initial sequence.
\begin{example}
Action of a quantum erasure channel on an element of $ \CS_4 $ : \\

$$\begin{array}{cccc}

\begin{bmatrix}
  	\ket{1} & \ket{2} & \ket{3} & \ket{4} \\
	\ket{3} & \ket{4} & \ket{2} & \ket{1} \\    
	\ket{4} & \ket{3} & \ket{1} & \ket{2} \\
	\ket{2} & \ket{1} & \ket{4} & \ket{3}
\end{bmatrix} \hspace{0.2\textwidth}&
\longrightarrow &
 \begin{bmatrix}
   	\ket{x} & \ket{2} & \ket{3} & \ket{x} \\
	\ket{3} & \ket{x} & \ket{x} & \ket{1} \\    
	\ket{4} & \ket{x} & \ket{x} & \ket{2} \\
	\ket{x} & \ket{1} & \ket{4} & \ket{x}
\end{bmatrix} 
\end{array}$$

\end{example}
In order to decode the transmitted sequence we can replace the ket \( \ket{x} \) by a random vector and apply the algorithm described in section 5. In this context we understand the importance of knowing when it is possible or not to uniquely recover a grid in \( \CS_{n^2} \) from an element of \( \CUG_{n^2} \) (i.e.~Conjecture \ref{conj:classical-1-quantum-more}). Indeed we need this result to characterize the performance of such a quantum Sudoku code, since this performance is related to the number of symbols which can be erased without affecting the initial message.

\section{Conclusion}
Sudoku puzzles are important because they constitute a simple but non-trivial example of SAT problems, and are proved to be NP-complete problems. In this paper we have shown how this particular SAT problem translates in the quantum world, and how we can solve it thanks to a non-commutative generalization of Sinkhorn algorithm. We also underlined the fact that these new definitions allow us to describe a new non-linear quantum error correcting code for quantum erasure channels. The notions introduced here require further investigations, in particular to give an answer to the two conjectures from Section \ref{sec:sudoq-solutions}.  

\bigskip

\noindent\textit{Acknowledgements.} The authors would like to thank the organizers of the program \emph{Operator Algebras, Groups and Applications to Quantum Information} held at the ICMAT (Madrid) in 2019 for the support and for providing the opportunity for this research to be started. Most of this research was done during J.P.'s internship at the LPT Toulouse, which we would also like to thank for support. 

\bibliographystyle{alpha}
\bibliography{sudoq}

\end{document}